\documentclass{article}[12pt]
\oddsidemargin=0.3truecm
\evensidemargin=0.2truecm
\textwidth=15.5truecm
\usepackage{amsmath, amsfonts,amssymb,graphicx}
\usepackage[usenames]{color}

\vfuzz2pt 
 \newtheorem{thm}{Theorem}
 \newtheorem{cor}[thm]{Corollary}
 
 \newtheorem{lem}[thm]{Lemma}
 \newtheorem{prop}[thm]{Proposition}
 \newtheorem{assu}[thm]{Assumption}
 
\newtheorem{alg}[thm]{Algorithm}

 \newtheorem{defn}[thm]{Definition}
 \newtheorem{rema}[thm]{Remark}

\newenvironment{rem}%
{  \begin{rema}%
\sl}
 {\end{rema} }
\newenvironment{proof}{{\par\noindent\bf Proof}.}{\hfill $\Box$\\}


 \newcommand{\A}{\mathcal{A}}
 \newcommand{\I}{\mathcal{I}}

 \newcommand{\N}{\mathbb{N}}

 \newcommand{\GBs}{Gr{\"o}bner Bases}

 \def\xb{{\bf x}}
 \def\yb{{\bf y}}
 
\def\CC{\mathbb{K}}
\def\fh{{\bf f}^h}
\def\f{{\bf f}}
\def\Mac{{\rm Mac}_{\Delta}(\mathbf{f})}


\begin{document}


\title{Moment Matrices, Trace Matrices and the Radical of Ideals}

%

\author{
%
I. Janovitz-Freireich\footnote{North Carolina State University, Department
of Mathematics. {\small North Carolina State University},
 {\small Campus Box 8205}, {\small Raleigh, NC, 27695}, {\small USA}
\texttt{\small\{ijanovi2, aszanto\}@ncsu.edu}.
Research supported by NSF
grant CCR-0347506.
}
\ \ \&\ 
A. Sz\'{a}nt\'{o}$^{*}$
\ \&\ 
B. Mourrain\footnote{GALAAD, INRIA, Sophia Antipolis, France, \texttt{\small mourrain@sophia.inria.fr}. Research partially supported by the ANR GECKO.            }
\ \&\ 
L. R\'{o}nyai\footnote{Computer and Automation Institute of the Hungarian Academy of
Sciences, and Budapest University of Technology and Economics.
{\small MTA SZTAKI}, {\small 1111 Budapest,}{\small L\'agym\'anyosi u. 11}, {\small Hungary},
\texttt{\small lajos@csillag.ilab.sztaki.hu}.
Research supported in part by OTKA grant NK63066.
}
}

\maketitle

\begin{abstract}
Let $f_1,\ldots,f_s \in \mathbb{K}[x_1,\ldots,x_m]$ be a system of
polynomials  generating a zero-dimensional ideal $\I$, where $\mathbb{K}$ is an
arbitrary  algebraically closed field. Assume that the factor
algebra $\A=\mathbb{K}[x_1,\ldots,x_m]/\I$ is Gorenstein and that we
have a bound $\delta>0$ such that a basis for $\A$ can be computed
from multiples of $f_1,\ldots,f_s$ of degrees at most $\delta$. We
propose a method using Sylvester or Macaulay type resultant matrices
of  $f_1,\ldots,f_s$ and $J$, where $J$ is a polynomial of degree
$\delta$ generalizing the Jacobian, to compute moment matrices, and
in particular matrices of traces for $\A$.  These matrices of traces
in turn allow us to compute a system of multiplication matrices
$\{M_{x_i}|i=1,\ldots,m\}$ of the radical $\sqrt{\I}$, following the
approach in the previous work by Janovitz-Freireich, R\'{o}nyai and
Sz\'ant\'o. Additionally, we give bounds for $\delta$ for the case
when $\I$ has finitely many projective roots in $\mathbb{P}^m_\CC$.
\end{abstract}



{\bf Keywords:} {Moment Matrices; Matrices of Traces; Radical Ideal;
Solving polynomial systems}

\section{Introduction}

 This paper is a continuation
of our previous investigation in \cite{JaRoSza06, JaRoSza07} to
compute the approximate radical of a zero dimensional ideal which
has zero clusters. The computation of  the radical of a zero
dimensional ideal is a very important problem in computer algebra
since a lot of the algorithms for solving polynomial systems with
finitely many solutions need to start with a radical ideal.  This is
also the case in many numerical approaches, where Newton-like
methods are used. From a symbolic-numeric perspective, when we are
dealing with approximate polynomials, the zero-clusters create great
numerical instability, which can be eliminated by computing the
approximate radical.

The theoretical basis of the symbolic-numeric algorithm presented in
\cite{JaRoSza06, JaRoSza07} was Dickson's lemma \cite{dickson},
which, in the exact case, reduces the problem of computing the
radical of a zero dimensional ideal to the computation of the
nullspace of the so called matrices of traces (see Definition
\ref{mtraces}):  in \cite{JaRoSza06, JaRoSza07} we studied numerical
properties of the matrix of traces when the roots are not multiple
roots, but form small clusters. Among other things we showed that
the direct computation of the matrix of traces (without the
computation of the multiplication matrices) is preferable since the
matrix of traces is continuous with respect to root perturbations
around multiplicities while multiplication matrices are generally
not.

It turns out that the computationally most expensive part of the
method in \cite{JaRoSza06, JaRoSza07} is the computation of the
matrix of traces. We address this problem in the present paper, and
give a simple algorithm using only Sylvester or Macaulay type
resultant matrices and elementary linear algebra to compute matrices
of traces of zero dimensional ideals satisfying certain conditions.

More precisely, we need the following assumptions: let
  ${\bf f}=[f_1,\ldots,f_s]$ be a system of polynomials of degrees
  $d_1\geq\cdots\geq d_s$ in $\CC[{\bf x}]$, with ${\bf
x}=[x_1,\ldots,x_{m}]$, generating an ideal $\I$ in $\CC[\xb]$,
where $\CC$ is an arbitrary  algebraically closed field. We assume
that the algebra  $\A:= \CC[{\bf x}]/\I$ is finite dimensional over
$\CC$ and that we have a   bound $\delta>0$ such that a basis
$S=[b_1, \ldots, b_N]$ of $\A$ can be  obtained
 by taking a linear basis of the space of polynomials of degree at most $\delta$ factored by  the subspace generated by  the multiples of $f_1,\ldots,f_s$  of
degrees at most $\delta$. By slight abuse of notation we denote the
elements of the basis $S$ which are in $\A$ and some fixed preimages of them in
$\CC[{\bf x}]$ both by $b_1,\ldots,b_N$. Thus we can assume that the basis $S$ consists of monomials of degrees at most $\delta$. Note that we can prove bounds
$\delta=\sum_{i=1}^{m+1}d_i-m$  (or $\delta=\sum_{i=1}^{m}d_i-m$ if
$s=m$)  if $\I$ has only finitely many {\it projective} common roots
in $\mathbb{P}^m_\CC$
and have no common roots at infinity,
using a result of Lazard \cite{La81} (see
Theorem \ref{deltatheorem}).

 Furthermore, we also assume that $\A$ is
Gorenstein over $\CC$ (see Definition \ref{gorenstein}). 
Note that in practice we can easily detect if $\A$ is not Gorenstein (see Remark \ref{rem0}).
Also, a random change of projective  variables can eliminate roots at infinity
with high probability when they are in finite number, but we will address the
necessity of this assumption in an upcoming paper.

The main ingredient of our method is a Macaulay type resultant matrix
$\Mac$, which is defined to be a maximal row-independent submatrix of the transpose
matrix  of the degree $\Delta$ Sylvester map $(g_1, \ldots, g_s)\mapsto \sum_{i=1}^s f_ig_i\in \CC[\xb]_\Delta$ for $\Delta\leq 2\delta+1$ (see Definition \ref{bigdelta}). Using our assumptions on $\A$,  we can compute a basis
$S$ of $\A$ using $\Mac$, and   we also prove that a random element $\yb$ of the  nullspace of $\Mac$ provides a
non-singular $N\times N$ moment matrix ${\mathfrak M}_S(\yb)$ with high probability (similarly as in \cite{LaLaRo07}).
This moment matrix allows us to compute the other main ingredient of our algorithm, a polynomial $J$ of degree at most
$\delta$, such that $J$ is the generalization of the Jacobian of $f_1, \ldots f_s$ in the case when $s=m$. The main result
 of the paper now can be formulated as follows: \\

{\sc Theorem} {\it Let $S=[b_1, \ldots, b_N]$ be a basis of $\A$
with $\deg(b_i)\leq \delta$. With $J$ as above, let ${\rm Syl}_S(J)$
be the transpose matrix  of the map $\sum_{i=1}^N c_i b_i\mapsto
J\cdot\sum_{i=1}^N c_i b_i\;\in \CC[x]_\Delta$ for $c_i\in \CC$.
Then $$ \left[Tr(b_ib_j)\right]_{i,j=1}^N= {\rm Syl}_S(J)\cdot X,
$$
where $X$ is the unique extension of the matrix $\mathfrak{M}_S({\bf
y})$ such that
$\Mac \cdot X=0.$}\\

Once we compute the matrix of traces
$R:=\left[Tr(b_ib_j)\right]_{i,j=1}^N$ and the matrices
$R_{x_k}:=\left[Tr(x_kb_ib_j)\right]_{i,j=1}^N= {\rm
Syl}_S(x_kJ)\cdot X$ for $k=1, \ldots, m$, we can use the results of
\cite{JaRoSza06, JaRoSza07} to compute a system of multiplication
matrices for the (approximate) radical of $\I$ as follows: if
$\tilde{R}$ is a (numerical) maximal non-singular submatrix of $R$
and $\tilde{R}_{x_k}$ is the submatrix of $R_{x_k}$ with the same
row and column indices as in $\tilde{R}$, then  the solution
$M_{x_k}$ of  the linear matrix equation
$$\tilde{R}M_{x_k}=\tilde{R}_{x_k}$$
is an (approximate) multiplication matrix of $x_k$  for the
(approximate) radical of $\I$.
See \cite{JaRoSza07} for the definition of (approximate) multiplication matrices. Note that a generating set for the radical $\sqrt{\I}$ can be obtained directly from the definition
of multiplication matrices, in particular, it corresponds to the rows of the matrices $M_{x_1}, \ldots,
M_{x_m}$.

We also point out that in the $s=m$
case these multiplication matrices $M_{x_k}$ can be obtained even more
simply using the nullspace of $\Mac$ and the Jacobian $J$ of ${\bf
f}$, without computing the matrices of traces.

We also note here that in a follow up paper we will consider an
extension of our present results which works also in the
non-Gorenstein case to compute the matrices of traces. Furthermore,
that paper will also extend our results to the {\it affine} complete
intersection case using Bezout  matrices.

\section{Related Work}

The motivation for this work was the papers \cite{LaLaRo07,
LaLaRo072} where they use moment matrices to compute the radical of
real and complex ideals. They present two versions of the method for
the complex case: first, in \cite{LaLaRo072} they double up the
machinery for the real case to obtain the radical of the complex
ideal.   However, in \cite{LaLaRo07} they significantly simplify
their method and show how to use moment matrices of maximal rank to
compute the multiplication matrices of an ideal between $\I$ and its
radical $\sqrt{\I}$. In particular, in the Gorenstein case they can
compute the multiplication matrices of $\I$. In fact, in
$\cite{LaLaRo07}$ they cite our previous work \cite{JaRoSza06} to
compute the multiplication matrices of $\sqrt{\I}$  from  the multiplication matrices of
$\I$, but the method proposed in the present paper is much simpler
and more direct.

Note that one can also obtain the multiplication matrices  of $\I$
with respect to the basis $S=[b_1, \ldots, b_N]$ by simply
eliminating the terms not in $S$ from $x_kb_i$ using ${\rm
Mac}_{\delta+1}({\bf f})$. The advantage of computing multiplication
matrices of the radical $\sqrt{\I}$  is that it  returns matrices
which are always simultaneously diagonalizable, and possibly smaller
than the multiplication matrices of $\I$, hence easier to work with. Moreover, if $S$ contains the
monomials $1,x_{1},\ldots,x_{m}$, one eigenvector computation yield directly
the coordinates of the roots.

Computation of  the radical of zero dimensional complex ideals is
very well studied in the literature:  methods most related to ours
include  \cite{GonVeg, becwor96} where matrices of traces are used
in order to find  generators of the radical, and the matrices of
traces are computed using \GBs; also, in \cite{ArSo95} they use
the traces to give a bound for the degree of the generators of the
radical and use linear solving methods from there; in \cite{GoTr95}
they describe the computation of the radical using symmetric
functions which are related to traces. One of the most commonly
quoted method to compute radicals is to compute the  projections
$\I\cap\CC[x_i]$ for each $i=1, \ldots, m$ and then use univariate
squarefree factorization (see for example
\cite{GiTrZa88,KrLo91b,Cox98,GrPf02} ). The advantage of the latter
is that it can be generalized for higher dimensional ideals (see for
example \cite{KrLo91}). We note here that an advantage of the method
using matrices of traces is that it behaves stably under
perturbation of the  roots of the input system, as was proved in
\cite{JaRoSza07}. Other methods to compute the radical of zero
dimensional ideals include \cite{KoMoHo89,GiMo89,
Lak90,Lak91,LakLaz91,YoNoTa92}. Applications of computing the
radical include \cite{HeObPa06}, where they show how to compute the
multiplicity structure of the roots of $\I$ once the radical is
computed.

Methods for computing the matrix of traces directly from the
generating polynomials of $\I$, without using multiplication
matrices, include \cite{DiazGonz01,brigonz02} where they use Newton
Sums, \cite{carmou96,CatDicStu96,CatDicStu98} where they use
residues and \cite{DaJe05} using resultants. Besides computing the
radical of an ideal,  matrices of traces have numerous applications
mainly in real algebraic geometry \cite{Be91, PeRoSz93, BeWo94}, or
in \cite{Rouiller99}  where trace matrices are applied  to find
separating linear forms deterministically.

\section{Moment Matrices and Matrices of Traces}

Let  ${\bf f}=[f_1,\ldots,f_s]$ be a system of polynomials of
degrees $d_1\geq\cdots\geq d_s$ in $\mathbb{K}[{\bf x}]$, where
${\bf x}=[x_1,\ldots,x_{m}]$ and $\CC$ is an arbitrary algebraically
closed field. Let $\I$ be the ideal generated by $f_1,\ldots,f_s$ in
$\mathbb{K}[{\bf x}]$ and define $\A:=\mathbb{K}[{\bf x}]/\I$.  We
assume throughout the paper that $\A$  is a finite dimensional
vector space over $\CC$  and let $\A^*$ denote the dual space of
$\A$.

Let us first recall the definition of a Gorenstein algebra (c.f.
\cite{Kun86,ScSt75,ElMo07,LaLaRo07}). Note that these algebras are also referred to as Frobenius in the literature, see for example \cite{BeCaRoSz96}.

\begin{defn}\label{gorenstein}
A finite dimensional $\CC$-algebra $\A$ is Gorenstein (over $\CC$)
if there exists a nondegenerate $\CC$-bilinear form $B(x,y)$ on $\A$
such that
\[
B(ab,c)=B(a,bc) \; \text{ for every } a,b,c \in \A.
\]
\end{defn}

Note that this is equivalent to the fact that $\A$ and $\A^*$ are
isomorphic as $\A$ modules. It is also equivalent to the existence
of a ${\mathbb K}$-linear function $\Lambda: \A\rightarrow {\mathbb
K}$ such that the bilinear form $B(a,b):=\Lambda(ab)$ is
nondegenerate on $\A$.


\begin{assu}\label{assump}
Throughout the paper we assume that $\A$ is Gorenstein. Furthermore,
we also assume that we have a bound $\delta>0$ such that
\begin{eqnarray}\label{deltacond}
N:=\dim_\mathbb{K}\mathbb{K}[\xb]_{\delta}/\langle f_1, \ldots,
f_s\rangle_{\delta}=\dim_\mathbb{K}\mathbb{K}[\xb]_{d}/\langle f_1,
\ldots, f_s\rangle_{d}
\end{eqnarray} for all $d\geq \delta$ and that
\begin{eqnarray}\label{dimA}
N=\dim \A.
 \end{eqnarray}
 Here
\begin{eqnarray}\label{fd}
\langle f_1, \ldots, f_s\rangle_{d}:=\left\{\sum_i f_i p_i\; :\;
   \deg(p_i) \leq d-d_i\right\}.
 \end{eqnarray}
 We fix $S=[b_1,\ldots,b_N]$ a monomial basis for $\A$ such that ${\rm deg}( b_i)\leq\delta$ for all $i=1,\ldots,N$. Let $D$ be the maximum degree of the monomials in $S$. Thus $D\leq \delta$.
\end{assu}

We have the following theorem giving bounds for $\delta$ in the case when ${\bf f}$ has finitely many projective roots.

\begin{thm}\label{deltatheorem}
Let  ${\bf f}=[f_1,\ldots,f_s]$ be a system of polynomials of
degrees $d_1\geq\cdots\geq d_s$ in $\mathbb{K}[{\bf x}]$. Assume
that $f_1,\ldots,f_s$ has finitely many projective common roots in
$\mathbb{P}_\CC^m$. Assume further that $f_1,f_2, \ldots ,f_s$ have no common roots at infinity. Then:
\begin{enumerate}

\item If $s=m$  then for $\delta:=\sum_{i=1}^m (d_i-1)$
    conditions (\ref{deltacond}) and (\ref{dimA}) are satisfied. Furthermore, in this case $\A$ is always Gorenstein.

\item If $s> m$ then
for $\delta:=\sum_{i=1}^{m+1} d_i-m$ conditions (\ref{deltacond}) and
(\ref{dimA}) are  satisfied.
\end{enumerate}
 \end{thm}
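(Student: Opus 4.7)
The plan is to homogenize the system and translate the two conditions into a statement about the graded Hilbert function of a zero-dimensional projective scheme, then invoke Lazard. Introducing an extra variable $x_0$, let $f_i^h \in \mathbb{K}[x_0,x_1,\ldots,x_m]$ be the degree-$d_i$ homogenization of $f_i$, and set $\I^h=\langle f_1^h,\ldots,f_s^h\rangle$ and $\A^h=\mathbb{K}[x_0,\ldots,x_m]/\I^h$. The hypothesis of finitely many projective common zeros says $V(\I^h)\subset\mathbb{P}^m$ is at most $0$-dimensional, and the no-roots-at-infinity assumption forces $x_0$ to avoid every minimal prime of $\I^h$; together they yield $V(\I^h)=V(\I)$ scheme-theoretically, so $\deg V(\I^h)=\dim_\mathbb{K}\A$.

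The central technical step I would isolate is a dehomogenization dictionary: the $\mathbb{K}$-linear map $\phi\colon \mathbb{K}[x_0,\ldots,x_m]_d\to \mathbb{K}[\mathbf{x}]_d$ sending $x_0\mapsto 1$ is a bijection between degree-$d$ homogeneous forms in $m+1$ variables and polynomials of degree $\leq d$ in $m$ variables. A routine degree bookkeeping shows that $\phi$ carries $\langle f_1^h,\ldots,f_s^h\rangle_d$ onto $\langle f_1,\ldots,f_s\rangle_d$ in the sense of (\ref{fd}), the inverse being homogenization padded by an appropriate power of $x_0$. This identifies
$$\dim_\mathbb{K}\mathbb{K}[\mathbf{x}]_d/\langle f_1,\ldots,f_s\rangle_d \;=\; \dim_\mathbb{K} \A^h_d,$$
so conditions (\ref{deltacond}) and (\ref{dimA}) reduce to the statement that the graded Hilbert function of $\A^h$ attains its stable value $\dim\A$ starting from the claimed $\delta$.

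For part (1), I would argue that the $m$ forms $f_1^h,\ldots,f_m^h$ in $m+1$ variables cut out a $0$-dimensional projective scheme, so by dimension counting they must form a regular sequence; thus $\A^h$ is a graded complete intersection with Hilbert series $\prod_{i=1}^m(1-t^{d_i})/(1-t)^{m+1}=P(t)/(1-t)$ where $P(t)=\prod_{i=1}^m(1+t+\cdots+t^{d_i-1})$. The coefficient is therefore constant and equal to $P(1)=\prod d_i=\dim\A$ (Bezout, using no roots at infinity) for every $d\geq\deg P=\sum_{i=1}^m(d_i-1)$. The Gorenstein conclusion is automatic since graded complete intersections are Gorenstein, and this property descends to $\A$: $x_0$ is a non-zero-divisor on $\A^h$, so one may pass to the degree-$0$ part of the localization $(\A^h)_{x_0}$, which is $\A$.

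For part (2), I would invoke Lazard's theorem \cite{La81}: since $\I^h$ defines a $0$-dimensional projective scheme, the Hilbert function of $\A^h$ stabilizes at $\deg V(\I^h)$ from degree $\sum_{i=1}^{m+1}d_i-m$ onward, and combined with the dictionary this yields (\ref{deltacond}) and (\ref{dimA}). I expect the main obstacles to be, first, the careful degree bookkeeping in the dehomogenization dictionary—matching the bound $\deg p_i\leq d-d_i$ on affine multipliers with homogeneous multipliers of exact degree $d-d_i$—and, second, extracting precisely Lazard's bound $\sum_{i=1}^{m+1}d_i-m$ in the overdetermined case rather than the looser complete-intersection-style bound one would get from picking $m+1$ generic forms.
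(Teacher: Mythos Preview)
Your proposal is correct and follows essentially the same route as the paper: homogenize, identify $\dim_\mathbb{K}\mathbb{K}[\mathbf{x}]_d/\langle f_1,\ldots,f_s\rangle_d$ with the graded Hilbert function of $\A^h$ via the dehomogenization dictionary, compute the Hilbert series of the graded complete intersection for $s=m$, and cite Lazard \cite{La81} for $s>m$. The only noteworthy difference is in the Gorenstein argument: the paper cites an explicit construction---the Bezoutian $B_{1,f_1,\ldots,f_m}$ from \cite[Prop.~8.25]{ElMo07} furnishing the isomorphism $\A^*\cong\A$---whereas you argue structurally that graded complete intersections are Gorenstein and descend this along the localization $(\A^h)_{x_0}\cong\A[x_0,x_0^{-1}]$. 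Both are valid; your route is cleaner conceptually, while the paper's citation has the advantage of producing the linear form $\Lambda$ concretely (the residue), which is consonant with the later use of the Jacobian as the generalized $J$ in the $s=m$ case.
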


\begin{proof}

For the first assertion let $\fh$ be the homogenization  of ${\bf f}$ using a new variable $x_{m+1}$. Using our assumption that $\fh$  has finitely many roots in
$\mathbb{P}_{\mathbb{K}}^m$ and $s=m$, one can see that $(\fh)$ is a
regular sequence in $R:={\mathbb K}[x_1, \ldots, x_m,x_{m+1}]$.

Define the graded ring  $B:=R/\langle \fh\rangle$. Following the approach
and notation
 in
\cite{Stan96}, we can now calculate the Hilbert series of $B$, defined by
$H(B,\lambda)=\sum_d {\cal H}_B(d) \lambda^d$, where $\cal{H}_B$ is the Hilbert function of $B$. We
have
$$H(R,\lambda)=\frac{H(B,\lambda)}{(1-\lambda^{d_1})\cdots (1-\lambda^{d_m})},$$
and using the simple fact that
$$H(R, \lambda)=\frac{1}{(1-\lambda)^{m+1}}$$
we obtain that
$$
\begin{aligned}
 H(B,\lambda)&=\frac{(1+\lambda+\cdots +\lambda^{d_1-1})\cdots
(1+\lambda+\cdots
+\lambda^{d_m-1})}{(1-\lambda)}\notag\\
&=g(\lambda)(1+\lambda+\ldots),\notag
 \end{aligned}$$
 where
$$g(\lambda) =(1+\lambda+\cdots +\lambda^{d_1-1})\cdots (1+\lambda+\cdots
+\lambda^{d_m-1}).$$
This implies that the Hilbert function
$${\cal H}_B(\delta)={\cal H}_B(\delta+1)={\cal H}_B(\delta+2)=\ldots $$
Note that dehomogenization induces a linear
isomorphism $B_d \rightarrow {\mathbb K}[{\bf x}]_d/\langle f_1,
\ldots, f_s\rangle_{d}$, where $B_d$ stands for the degree $d$
homogeneous part of $B$. From this, using that there are no common roots at infinity,
we infer that for $d\geq \delta$
$\dim _{\mathbb K}{\mathbb K}[{\bf x}]_d/\langle f_1,
\ldots, f_s\rangle_{d}=\dim_\CC \A = N$, which implies  (\ref{deltacond}) and (\ref{dimA}).

Note that the common value $N={\cal H}_B(\delta)$ is the sum of the coefficients
of $g$, which is
$$g(1)=\prod_{i=1}^m d_i.$$

To prove that $\A$ is Gorenstein, we cite  \cite[Proposition 8.25, p. 221]{ElMo07} where it is proved that if $f_{1},\ldots, f_{m}$ is an affine
complete intersection then the Bezoutian  $B_{1,f_{1},\ldots,f_{m}}$
defines an isomorphism between ${\A}^*$ and $\A$.

To prove the second assertion we note that \cite[Theorem 3.3]{La81} implies that
$$
\dim _{\mathbb K}B_\delta=\dim _{\mathbb K}B_{\delta+1}=\ldots .
$$
From here we obtain (\ref{deltacond}) and (\ref{dimA}) as in the Case 1.

\end{proof}

\begin{rem} Note that in general $\I_d\neq\langle f_1,\ldots,f_s \rangle_d$, where $I_d$ is the set of elements of $I$ with degree at most $d$ and $\langle f_1,\ldots,f_s \rangle_d$ was defined in (\ref{fd}). This can happen when the system has a root at infinity, for example, if $f_1=x+1, \;f_2=x$ then
$\I_0={\rm span}_\CC(1)$ but $\langle f_1,f_2 \rangle_0=\{0\}$
However, using the  homogenization $f_1^h,\ldots,f_s^h$, the degree
$d$ part of the homogenized ideal is always equal to the space
spanned by the multiples of $f_1^h,\ldots,f_s^h$ of  degree $d$. The above example also demonstrates
that $\dim \A$ is not always the same as $\dim\CC[\xb]_d/\langle f_1,\ldots,f_s \rangle_d$ for large enough $d$, because above $\dim\A=0$ but  $\dim\CC[x,y]_d/\langle f_1,f_2 \rangle_d=1$ for all $d\geq 0$.
\end{rem}

Next we will define Sylvester and Macaulay type resultant matrices
for $f_1,\ldots f_s$.

\begin{defn}\label{bigdelta}
Define $$\Delta:=\max(\delta,2D+1)$$ where $\delta$ and $D$ are
defined in Assumption \ref{assump}. 

Let ${\rm Syl}_{\Delta}(\mathbf{f})$ be the transpose matrix of the
linear map
\begin{eqnarray}
\bigoplus_i \CC[{\bf x}]_{\Delta-d_i}&\longrightarrow &\CC[{\bf x}]_{\Delta}\label{syl}\\
(g_1,\ldots,g_s)&\mapsto& \sum_{i=1}^sf_ig_i\notag
\end{eqnarray}
written in the monomial bases. So, in our notation, ${\rm
Syl}_{\Delta}(\mathbf{f})$ will have rows which correspond to all polynomials $f_ix^\alpha$ of degree at most
   $\Delta$.

Let $\Mac$ be a row submatrix of ${\rm Syl}_{\Delta}({\bf f})$ of maximal size with
linearly independent rows.
\end{defn}

\begin{rem} In the case where $s=m$, for generic ${\bf f}$ we can directly construct $\Mac$ by taking the restriction of the map (\ref{syl}) to
\[
\bigoplus_{i=1}^m {\cal S}_i(\Delta)\longrightarrow
\CC[{\bf x}]_{\Delta}
\]
where ${\cal S}_i(\Delta)={\rm span}\{{\bf x}^{\alpha}:|\alpha|\leq\Delta-d_i,\,
\forall
 j< i,\,\alpha_j< d_j\}$.

Here $\Mac$ is a submatrix of the classical Macaulay matrix of the
homogenization of $\f$ and some $f^h_{m+1}$, where $f^h_{m+1}$ is
any homogeneous polynomial of degree $\Delta-\delta$: we only take
the rows corresponding to the polynomials in $\f$. Since the
Macaulay matrix is generically non-singular, $\Mac$ will also be
generically full rank.

Note that with our assumption that $f_1,\ldots,f_m$ has finitely
many projective roots, we have that $\Mac$ has column \rm{corank} $
N:=\prod_{i=1}^m d_i. $

\end{rem}

Since $\Delta\geq\delta$, by Assumption \ref{assump} the corank of
$\Mac=N$, where $N$ is the dimension of $\A$. Also, we can assume
that the elements of the basis $S$ of $\A$ are monomials of degree
at most $\delta$, and   that the first columns of $\Mac$ correspond
to the  basis $S$ of $\A$.

Fix an element $${\bf y}=[y_{\alpha}:\alpha\in \N^m, \;|\alpha|\leq
\Delta]^T$$ of the nullspace ${\rm Null}(\Mac)$, i.e. $\Mac\cdot\yb=0$.

\begin{defn}\label{moment}
Let $S$ be the basis of $\A$ as above, consisting of monomials of degree at most $D$.
 Using ${\bf y}$ we can  define $\Lambda_\yb\in \A^*$ by
$
\Lambda_\yb (g):=\sum_{\xb^\alpha\in S} y_{\alpha}g_{\alpha},
$
where $g=\sum_{\xb^\alpha\in S}g_{\alpha}\xb^\alpha\in \A$. Note
that every $\Lambda\in \A^*$ can be defined as $\Lambda_\yb$ for
some $\yb\in {\rm Null}(\Mac)$ or more generally with an element of
${\mathbb K}[{\bf x}]^{*}$ which vanishes on the ideal
$\I$.

Define the \emph{moment matrix} $\mathfrak{M}_S({\bf y})$ to be the
$N\times N$ matrix given by
$$
 \mathfrak{M}_S({\bf y})=[y_{\alpha+\beta}]_{\alpha,\beta},$$
 where $\alpha$ and $\beta$ run through the exponents of the monomials in $S$. Note that $\mathfrak{M}_S$ is only a
   submatrix of the usual notion of moment matrices in the literature, see for example
   \cite{CuFi96}.

For $p \in \A$, we define the linear function  $p\cdot \Lambda\in
\A^*$ as $p\cdot \Lambda(g):=\Lambda(pg)$ for all $g\in \A$.

\end{defn}

\begin{rem}
If one considers a linear function $\Lambda$ on $\A$, such that the
bilinear form $(x,y)\mapsto\Lambda(xy)$ is nondegenerate on $\A$,
then the moment matrix corresponding to this $\Lambda$ will be the
one whose $(i,j)$-th entry  is just $\Lambda(b_ib_j)$. Moreover, for
$g,h\in \A$
\[
\Lambda_\yb(gh)={\rm coeff}_S(g)^T\cdot\mathfrak{M}_S(\yb)\cdot{\rm
coeff}_S(h)
\]
where ${\rm coeff}_S(p)$ denotes the vector of coefficients of $p\in
\A$ in the basis $S$.
\end{rem}

The following proposition is a simple corollary of \cite[Prop 3.3
   and Cor. 3.1]{LaLaRo07}.

\begin{prop}
Let $\,{\bf y}$ be a random element of the vector space ${\rm
Null}(\Mac)$. With high probability, $\mathfrak{M}_S({\bf y})$ is
non-singular.
\end{prop}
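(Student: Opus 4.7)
The plan is to identify the moment matrix with the Gram matrix of a natural bilinear form on $\A$, produce a single $\yb$ for which this form is non-degenerate, and then conclude via a generic non-vanishing argument.

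First, I would make the identification
\[
\mathfrak{M}_S(\yb)_{ij} = y_{\alpha_i + \alpha_j} = \Lambda_\yb(b_ib_j),
\]
where $b_i=\xb^{\alpha_i}$. The second equality uses that $\yb\in {\rm Null}(\Mac)$, so the linear functional on $\CC[\xb]_\Delta$ naturally associated to $\yb$ vanishes on $\langle f_1,\ldots,f_s\rangle_\Delta$ and hence descends to a well-defined functional on $\A$; since $|\alpha_i+\alpha_j|\leq 2D\leq \Delta$ by Definition \ref{bigdelta}, the reduction of $\xb^{\alpha_i+\alpha_j}$ modulo the ideal equals $b_ib_j$ in $\A$, as needed. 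This shows that $\mathfrak{M}_S(\yb)$ is precisely the Gram matrix, in the basis $S$, of the bilinear form $B_\yb(a,b):=\Lambda_\yb(ab)$ on $\A$, so $\mathfrak{M}_S(\yb)$ is non-singular if and only if $B_\yb$ is non-degenerate.

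The main step is to exhibit at least one $\yb^*\in {\rm Null}(\Mac)$ for which $\mathfrak{M}_S(\yb^*)$ is non-singular; this is exactly where the Gorenstein hypothesis enters. By Assumption \ref{assump} and Definition \ref{gorenstein}, there is a $\Lambda^*\in \A^*$ such that $(a,b)\mapsto \Lambda^*(ab)$ is non-degenerate on $\A$. By the remark immediately following Definition \ref{moment}, the assignment $\yb\mapsto \Lambda_\yb$ defines a surjective linear map ${\rm Null}(\Mac)\to \A^*$, so we can lift $\Lambda^*$ to some $\yb^*\in {\rm Null}(\Mac)$, and for this $\yb^*$ the Gram matrix $\mathfrak{M}_S(\yb^*)$ is non-singular. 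I expect this existence step to be the main obstacle, since it hinges simultaneously on the Gorenstein hypothesis and on the explicit correspondence between ${\rm Null}(\Mac)$ and $\A^*$ provided by Assumption \ref{assump}.

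To conclude, I would view $\yb\mapsto \det \mathfrak{M}_S(\yb)$ as a polynomial of degree $N$ in the coordinates of a parameter vector for the vector space ${\rm Null}(\Mac)$. The previous step shows that this polynomial is not identically zero, hence its zero locus is a proper Zariski closed subset of ${\rm Null}(\Mac)$. A Schwartz--Zippel style estimate then yields that if the free parameters of $\yb$ are drawn uniformly from a finite subset $T\subseteq \CC$, the probability that $\mathfrak{M}_S(\yb)$ is singular is at most $N/|T|$, which is arbitrarily small for $|T|$ sufficiently large compared to $N$. This is the ``with high probability'' conclusion claimed in the proposition.
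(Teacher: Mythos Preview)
Your argument is correct. The paper itself does not give a proof of this proposition: it simply states that the result is ``a simple corollary of \cite[Prop.~3.3 and Cor.~3.1]{LaLaRo07}'' and leaves it at that. Your write-up supplies exactly the details that this citation is standing in for: the identification $\mathfrak{M}_S(\yb)_{ij}=\Lambda_\yb(b_ib_j)$ via the fact that $\yb$ annihilates $\langle f_1,\ldots,f_s\rangle_\Delta$, the existence of a non-degenerate $\Lambda^*$ from the Gorenstein hypothesis together with the surjectivity of $\yb\mapsto\Lambda_\yb$ (which the paper asserts right after Definition~\ref{moment}), and the Schwartz--Zippel bound on the degree-$N$ polynomial $\det\mathfrak{M}_S(\yb)$. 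So there is no methodological difference to speak of; you have simply unpacked the reference into a self-contained proof.
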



 \begin{rem}\label{rem0}
Using the above proposition, one can detect whether the algebra $\A$ is not Gorenstein with high probability by simply computing the rank of $\mathfrak{M}_S({\bf y})$ for (perhaps several) random elements $\,{\bf y}$ in ${\rm
Null}(\Mac)$.
\end{rem}

\begin{rem}\label{rem1}
%
%
By \cite[Theorem 2.6 and Lemma
3.2]{LaLaRo07} one can extend ${\bf y}$ to $\tilde{\bf
y}\in\CC^{\N^m}$ such that the infinite moment matrix
$\mathfrak{M}(\tilde{\bf y}):=[\tilde{ y}_{\alpha+\beta}]_{\alpha,
\beta\in \N^m}$ has the same rank as $\mathfrak{M}_S({\bf y})$ and
the columns of $\mathfrak{M}(\tilde{\bf y})$ vanish on all the
elements of the ideal $\I$.
\end{rem}

Next we define a basis dual to $S=[b_1,\ldots,b_N]$ with respect to
the moment matrix $\mathfrak{M}_S({\bf y})$.  Using this dual basis
we also define a polynomial $J$ which is in some sense a
generalization of the Jacobian of a well-constrained polynomial
system.

\begin{defn} \label{dualbasis}

From now on we fix $\yb\in {\rm Null}(\Mac)$ such that
$\mathfrak{M}_S({\bf y})$ is invertible and we will denote by
$\Lambda$ the corresponding element $\Lambda_\yb\in \A^*$. We define

$$\mathfrak{M}_S^{-1}({\bf y})=:[c_{ij}]_{i,j=1}^N.$$

Let $b_i^*:=\sum_{j=1}^N c_{ji}b_j.$
Then $[b_1^*, \ldots, b_N^*]$ corresponds to the columns of the
inverse matrix  $\mathfrak{M}_S^{-1}({\bf y})$ and   they also form a basis
for $\A$. Note that we have $\Lambda(b_ib_j^*)=1$, if $i=j$, and 0
otherwise.

Define the {\em generalized Jacobian} by
\begin{eqnarray}
J:=\sum_{i=1}^Nb_ib_i^*\;\text{ mod }  \I\label{J}
\end{eqnarray}
expressed in the basis $S=[b_1, \ldots, b_N]$ of $\A$.

\end{defn}

\begin{rem}
Note that since $\sum_{i=1}^N b_ib^*_i$ has degree at most $2D$, and $\Delta>2D$,
we can use $\Mac$ to find its reduced form, which is $J$. Because of
this reduction, we have that ${\rm deg}(J)\leq D\leq \delta$.

Also note that the notion of generalized Jacobian   was also introduced in
\cite{BeCaRoSz96}. Its name come from the fact that if $\,s=m$ and if $\,\Lambda$ is the so called
residue (c.f. \cite{ElMo07}),  then $\sum_{i=1}^N b_ib^*_i=J$ is the
Jacobian of $f_1, \ldots, f_m$. 
\end{rem}

We now recall the definition of the multiplication matrices and the
matrix of traces as presented in \cite{JaRoSza07}.

\begin{defn}\label{mtraces}
Let $p\in \A$. The multiplication matrix $M_p$ is the transpose of
the matrix of the multiplication map
\[
\begin{aligned}
M_p:\A&\longrightarrow \A\notag\\
g&\mapsto pg
\end{aligned}
\]
written in the basis $S$.

The \emph{matrix of traces} is the $N\times N$ symmetric matrix:
\[
R=\left[Tr(b_ib_j)\right]_{i,j=1}^N
\]
where $Tr(pq):= Tr(M_{pq})$,  $M_{pq}$ is the multiplication matrix
of $pq$ as an element in $\A$ in terms of the basis $S=[b_1, \ldots,
b_N]$ and $Tr$ indicates the trace of a matrix.
\end{defn}

The next results relate the multiplication by $J$ matrix to the
matrix of traces $R$.

\begin{prop}
Let $M_J$ be the multiplication matrix of $J$ with respect to the
basis $S$. We then have that
$$M_J=[Tr(b_ib_j^*)]_{i,j=1}^N. $$
\end{prop}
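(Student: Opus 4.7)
My strategy is to establish a general trace formula for Gorenstein algebras and then specialize it to the basis element $b_i b_j^*$. The key observation is that the generalized Jacobian $J=\sum_i b_i b_i^*$ is the ``Casimir-like'' element that realizes the trace as $\Lambda(-\cdot J)$, and once this is in hand, the duality $\Lambda(b_k b_j^*)=\delta_{kj}$ from Definition \ref{dualbasis} immediately reads off matrix entries of $M_J$ in the basis $S$.

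Concretely, I would first prove the identity
\begin{equation}
Tr(M_p) = \Lambda(p J) \qquad \text{for every } p\in\A.\notag
\end{equation}
To verify it, expand $p b_i = \sum_k a_{ki} b_k$; then $\Lambda(p b_i \cdot b_i^*) = \sum_k a_{ki}\Lambda(b_k b_i^*) = a_{ii}$ by duality. Summing over $i$ gives the trace of the multiplication map by $p$, which equals $Tr(M_p)$. Since $J\equiv \sum_i b_i b_i^*$ modulo $\I$, the sum $\sum_i \Lambda(p b_i b_i^*)$ is exactly $\Lambda(pJ)$. Applying this identity with $p = b_i b_j^*$ yields
\begin{equation}
Tr(b_i b_j^*) \,=\, Tr(M_{b_i b_j^*}) \,=\, \Lambda(b_i b_j^* J) \,=\, \Lambda(J b_i \cdot b_j^*).\notag
\end{equation}

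It remains to recognize the right-hand side as $(M_J)_{ij}$. By the transpose convention of Definition \ref{mtraces}, $J b_i = \sum_k (M_J)_{ik} b_k$, so pairing with $b_j^*$ under $\Lambda$ and again using $\Lambda(b_k b_j^*)=\delta_{kj}$ gives $\Lambda(J b_i\cdot b_j^*) = (M_J)_{ij}$. Combining the two equalities proves $(M_J)_{ij} = Tr(b_i b_j^*)$, which is the claim.

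I do not anticipate any serious obstacle: the Gorenstein hypothesis supplies the nondegeneracy of $\Lambda(\cdot\,\cdot)$ which is exactly what lets the dual basis $\{b_i^*\}$ and hence $J$ exist, and the rest is a two-line unwinding of definitions. The only care needed is bookkeeping around the transpose in the definition of $M_p$, so that the expansion of $Jb_i$ involves $(M_J)_{ik}$ rather than $(M_J)_{ki}$; getting this wrong would only transpose the final matrix, but since the matrix of traces is symmetric the answer would still look plausible, so I would double-check this step carefully.
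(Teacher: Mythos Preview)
Your proof is correct and follows essentially the same route as the paper's: both first establish the trace formula $Tr(h)=\Lambda(hJ)$ for all $h\in\A$ via the duality $\Lambda(b_kb_j^*)=\delta_{kj}$, and then specialize to $h=b_ib_j^*$ while recognizing $\Lambda(Jb_ib_j^*)$ as the $(i,j)$-entry of $M_J$ through the expansion $Jb_i=\sum_k (M_J)_{ik}b_k$. The only cosmetic difference is that the paper first records the dual-basis expansion $h=\sum_j\Lambda(hb_j^*)b_j$ and reads off $M_h[i,j]=\Lambda(hb_j^*b_i)$ before taking the trace, whereas you compute the diagonal entries directly; the content is identical.
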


\begin{proof}
Let $\Lambda\in \A^*$ be as in Definition \ref{dualbasis}. For any
$h\in \A$ we have that
\[
\begin{aligned}
&h=\sum_{j=1}^N \Lambda(hb_j)b_j^*=\sum_{j=1}^N
\Lambda(hb_j^*)b_j\notag\\
\Rightarrow\quad &hb_i=\sum_{j=1}^N \Lambda(hb_j^*b_i)b_j= M_h[i,j]=\Lambda(hb_j^*b_i)\notag\\
\Rightarrow\quad &Tr(h)=\sum_{i=1}^N
\Lambda(hb_i^*b_i)=\Lambda(h\sum_{i=1}^N b_i^*b_i).\notag
\end{aligned}
\]

Since $J=\sum_{i=1}^N b_i^*b_i$ in $\A$, we have $Tr(h)=\Lambda(hJ)$.

Therefore
\[
M_J[i,j]=\Lambda(Jb_j^*b_i)=Tr(b_j^*b_i) =Tr(b_ib_j^*)\]
\end{proof}

\begin{cor}
\[
M_J\cdot\mathfrak{M}_S({\bf y})=[Tr(b_ib_j)]_{i,j=1}^N=R,
\]
or equivalently $J\cdot \Lambda = Tr$ in $\A^{*}$.
\end{cor}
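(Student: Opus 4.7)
The plan is to combine the previous proposition, which identifies $M_J$ with the matrix $[Tr(b_ib_j^*)]$, with the dual basis relation between $\{b_j\}$ and $\{b_j^*\}$ encoded by the moment matrix. The two sides of the corollary are essentially the same statement written once on $\A$ (as matrices) and once on $\A^*$ (as linear functionals), so I would first prove the matrix identity and then read off the dual form.

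The first step is to expand the $(i,k)$-entry of the product $M_J\cdot\mathfrak{M}_S(\mathbf{y})$. By the previous proposition, $M_J[i,j]=Tr(b_ib_j^*)$, and by the definition of the moment matrix, $\mathfrak{M}_S(\mathbf{y})[j,k]=\Lambda(b_jb_k)$. Hence
\[
\bigl(M_J\cdot\mathfrak{M}_S(\mathbf{y})\bigr)[i,k]=\sum_{j=1}^N Tr(b_ib_j^*)\,\Lambda(b_jb_k).
\]
The second step is to recognize the right-hand side as $Tr(b_ib_k)$. For this I would invoke the expansion $h=\sum_{j=1}^N\Lambda(hb_j)b_j^*$ that already appeared in the proof of the preceding proposition (it is the standard dual basis expansion with respect to the nondegenerate pairing $(a,b)\mapsto\Lambda(ab)$). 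Applying it with $h=b_k$ gives $b_k=\sum_j\Lambda(b_jb_k)\,b_j^*$; multiplying by $b_i$ and taking traces (which is $\CC$-linear) yields
\[
Tr(b_ib_k)=\sum_{j=1}^N\Lambda(b_jb_k)\,Tr(b_ib_j^*),
\]
which matches the expression above. This establishes $M_J\cdot\mathfrak{M}_S(\mathbf{y})=R$.

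For the equivalent statement $J\cdot\Lambda=Tr$ in $\A^*$, I would just reuse the identity $Tr(h)=\Lambda(hJ)$ derived inside the proof of the previous proposition: by definition of the action of $\A$ on $\A^*$, $(J\cdot\Lambda)(h)=\Lambda(Jh)=Tr(h)$ for every $h\in\A$, so the two linear functionals coincide. No real obstacle is expected here; the only subtlety is book-keeping to make sure the right/left conventions for the multiplication matrix and the moment matrix match, i.e. that the factor $\mathfrak{M}_S(\mathbf{y})$ plays the role of the change-of-basis from $\{b_j\}$ to $\{b_j^*\}$ in the correct order so that the product yields $[Tr(b_ib_j)]$ rather than its transpose; this is handled automatically since $\mathfrak{M}_S(\mathbf{y})$ is symmetric.
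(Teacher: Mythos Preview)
Your proof is correct and follows essentially the same approach as the paper: both use the change of basis between $\{b_j\}$ and $\{b_j^*\}$ encoded by $\mathfrak{M}_S(\mathbf{y})$ to pass from $[Tr(b_ib_j^*)]$ to $[Tr(b_ib_j)]$. The paper phrases this as the matrix factorization $M_J=[Tr(b_ib_j^*)]=R\cdot\mathfrak{M}_S^{-1}(\mathbf{y})$ (using that the columns of $\mathfrak{M}_S^{-1}(\mathbf{y})$ are the coordinate vectors of the $b_j^*$), which is exactly the computation you carry out entrywise via the dual expansion $b_k=\sum_j\Lambda(b_jb_k)\,b_j^*$.
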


\begin{proof}
The coefficients of $b_i^*$ in the basis $S=[b_1,\ldots,b_N]$
are the columns of $\mathfrak{M}_S^{-1}({\bf y})$, which implies
that
\[
M_J=[Tr(b_ib_j^*)]_{i,j=1}^N=[Tr(b_ib_j)]_{i,j=1}^N\cdot\mathfrak{M}_S^{-1}({\bf
y}).
\]
Therefore we have that $M_J\cdot\mathfrak{M}_S({\bf
y})=[Tr(b_ib_j)]_{i,j=1}^N$.
\end{proof}

Finally, we prove that the matrix of traces $R$ can be computed
directly from the Sylvester matrix of $f_1, \ldots, f_s$ and $J$,
without using the multiplication matrix $M_J$. First we need a
lemma.

\begin{lem}\label{Rlemma}
There exists a unique matrix $\mathfrak{R}_S({\bf y})$ of size
$|{\rm Mon}_{\leq}(\Delta)-S| \times |S|$ such that
\begin{small}
\[
\Mac\cdot\begin{tabular}{|c|} \hline
\\
$\mathfrak{M}_S({\bf y})$\\
\\
\hline
\\
$\mathfrak{R}_S({\bf y})$\\
\\
\hline
\end{tabular}=0
\]
\end{small}
\end{lem}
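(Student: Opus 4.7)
The plan is to reinterpret the claim column-by-column as a statement about $\mathrm{Null}(\Mac)$. Writing
$$X = \begin{pmatrix} \mathfrak{M}_S(\yb) \\ \mathfrak{R}_S(\yb) \end{pmatrix},$$
whose rows are indexed by monomials of degree at most $\Delta$ and whose columns are indexed by the elements of $S$, the equation $\Mac \cdot X = 0$ asserts exactly that each column of $X$ lies in $\mathrm{Null}(\Mac)$. Because the first $N$ entries of the $j$-th column are already prescribed by the $j$-th column of $\mathfrak{M}_S(\yb)$, the construction of $\mathfrak{R}_S(\yb)$ reduces to proving that, for each $j = 1, \ldots, N$, there is a unique element of $\mathrm{Null}(\Mac)$ whose $S$-coordinates match the prescribed data.

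Next I would set up the standard identification between $\mathrm{Null}(\Mac)$ and the space of linear functionals on $\CC[\xb]_\Delta$ vanishing on $\langle\f\rangle_\Delta$: a nullspace vector $(v_\alpha)_{|\alpha|\le\Delta}$ corresponds to the functional $\sum c_\alpha \xb^\alpha \mapsto \sum c_\alpha v_\alpha$, and the rows of $\Mac$, being coefficient vectors of polynomials $f_i\xb^\gamma$, enforce vanishing on $\langle\f\rangle_\Delta$. By Assumption \ref{assump} the natural map $\CC[\xb]_\Delta / \langle\f\rangle_\Delta \to \A$ is an isomorphism, so every such functional descends to an element of $\A^*$, which in turn is determined by its values on the basis $S$. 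Under the correspondence these values are precisely the $S$-coordinates of the nullspace vector, so the projection $\pi: \mathrm{Null}(\Mac) \to \CC^N$ onto the $S$-coordinates is injective; since both sides have dimension $N$, the map $\pi$ is in fact an isomorphism.

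From this isomorphism both halves of the claim follow at once: surjectivity produces, for each $j$, a (necessarily unique) preimage of the $j$-th column of $\mathfrak{M}_S(\yb)$, whose remaining entries (indexed by $\mathrm{Mon}_\le(\Delta) \setminus S$) define the $j$-th column of $\mathfrak{R}_S(\yb)$, while injectivity yields the uniqueness assertion. A small sanity check to perform along the way is that the prescribed entries $y_{\beta_i+\beta_j}$ of $\mathfrak{M}_S(\yb)$ are genuinely among the coordinates of $\yb$, which is guaranteed by $|\beta_i+\beta_j|\le 2D \le \Delta-1$, following from the choice $\Delta = \max(\delta, 2D+1)$ in Definition \ref{bigdelta}. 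There is no substantive obstacle here: the argument is a transparent exercise in linear-algebraic duality once the dictionary between $\mathrm{Null}(\Mac)$ and $\A^*$ is in place.
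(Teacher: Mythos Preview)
Your argument is correct and rests on the same underlying fact as the paper's proof, but the presentation is genuinely different. The paper writes $\Mac=[B\,|\,A]$ with the columns of $B$ indexed by $S$, argues directly that the square block $A$ is invertible (because the rows of $\Mac$ span $\langle\f\rangle_\Delta$ and $S$ is a basis of $\CC[\xb]_\Delta/\langle\f\rangle_\Delta$), and then reads off the explicit formula $\mathfrak{R}_S(\yb)=-A^{-1}B\cdot\mathfrak{M}_S(\yb)$. Your route instead identifies $\mathrm{Null}(\Mac)$ with $(\CC[\xb]_\Delta/\langle\f\rangle_\Delta)^*\cong\A^*$ and observes that restriction to $S$ is an isomorphism onto $\CC^N$; existence and uniqueness of each column then drop out by bijectivity.

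These are two phrasings of the same linear-algebraic fact: invertibility of $A$ is exactly the statement that the projection $\pi:\mathrm{Null}(\Mac)\to\CC^N$ onto the $S$-coordinates is an isomorphism. What the paper's version buys is the closed formula for $\mathfrak{R}_S(\yb)$, which is convenient for the algorithmic discussion that follows. What your version buys is a cleaner conceptual picture---it makes transparent \emph{why} the extension exists (every functional in $\A^*$ lifts uniquely to a nullspace vector), and it connects directly to the paper's subsequent remark that the $j$-th column of the stacked matrix represents $b_j\cdot\Lambda$ on all of $\mathrm{Mon}_{\le}(\Delta)$.
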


\begin{proof}
By our assumption that the first columns  of ${\rm
Mac}_{\Delta}(\f)$ correspond to $S$ we have
\[
\begin{small}
\begin{tabular}{|ccc|}
\hline
&&\\
&$\Mac$&\\
&&\\
\hline
\end{tabular}=\begin{tabular}{|ccc|ccc|}
 \hline
&&&&&\\
&$B$&&&$A$&\\
&&&&&\\
\hline
\end{tabular}\,,
\end{small}
\]
where the columns of $B$ are indexed by the monomials in $S$.
Note here that  by
Assumption \ref{assump} the rows of $\Mac$ span $\I_\Delta$, and the monomials in $S$ span the factor
space $\CC[\xb]_\Delta/\I_\Delta$. These together imply that the (square)
submatrix $A$ is invertible.

Then
\begin{small}
\[\begin{tabular}{|ccc|ccc|}
 \hline
&&&&&\\
&$B$&&&$A$&\\
&&&&&\\
\hline
\end{tabular}\cdot
\begin{tabular}{|c|}
\hline
\\
$Id_{N\times N}$\\
\\
\hline
\\
$-A^{-1}B$\\
\\
\hline
\end{tabular}=0
\]
\end{small}

which implies that
\begin{small}
\[
\Mac\cdot\begin{tabular}{|c|} \hline
\\
$\mathfrak{M}_S({\bf y})$\\
\\
\hline
\\
$\mathfrak{R}_S({\bf y})$\\
\\
\hline
\end{tabular}=0,
\]
\end{small}
where $\mathfrak{R}_S({\bf y})=-A^{-1}B\cdot\mathfrak{M}_S({\bf
y})$.

\end{proof}

By construction, the column of $\mathfrak{M}_S({\bf y})$ indexed by
$b_{j} \in S$ corresponds to the values of $b_{j}\cdot \Lambda \in
\A^{*}$ on $b_{1},\ldots,b_{N}$. The same column in
$\mathfrak{R}_S({\bf y})$ corresponds to the values of $b_{j}\cdot
\Lambda$ on the complementary set of monomials of
$\mathrm{Mon}_{\le}(\Delta)$. The column in the stacked matrix
corresponds to the value of $b_{j}\cdot \Lambda$ on all the
monomials in $\mathrm{Mon}_{\le}(\Delta)$.  To evaluate $b_{j}\cdot
\Lambda(p)$ for a polynomial $p$ of degree $\le \Delta$, we simply
compute the inner product of the coefficient vector of $p$ with this
column.

\begin{defn}\label{SylS}
Let $S=[b_1, \ldots, b_N]$ be the basis of $\A$ as above, and let
$P\in \CC[\xb]$ be a polynomial of degree at most $D+1$.

Define ${\rm Syl}_{S}(P)$ to be the matrix with rows corresponding
to the coefficients of the polynomials $(b_1P),\ldots,(b_NP)$ in the
monomial basis ${\rm Mon}_{\leq}(\Delta)$ (we use here that ${\rm
deg}(b_i)\leq D$, thus ${\rm deg}(b_iP)\leq 2D+1\leq\Delta$).

Furthermore,  we assume that the monomials corresponding to the
columns of ${\rm Syl}_{S}(P)$ are  in the same order as the
monomials  corresponding to  the columns of $\Mac$.
\end{defn}

\begin{thm}
\begin{small}
\[
\begin{tabular}{|ccc|}
\hline
&&\\
&${\rm Syl}_{S}(J)$&\\
&&\\
\hline
\end{tabular}\cdot
\begin{tabular}{|c|}
\hline
\\
$\mathfrak{M}_S({\bf y})$\\
\\
\hline
\\
$\mathfrak{R}_S({\bf y})$\\
\\
\hline
\end{tabular}=[Tr(b_ib_j)]_{i,j=1}^N
\]
\end{small}
\end{thm}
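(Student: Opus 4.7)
The plan is to interpret each side of the claimed identity entry-by-entry and then reduce to the previously established identity $\mathrm{Tr}(h) = \Lambda(hJ)$ for all $h \in \A$.

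First I would unpack the meaning of the columns of the stacked matrix $X := \left[\begin{smallmatrix}\mathfrak{M}_S(\mathbf{y})\\ \mathfrak{R}_S(\mathbf{y})\end{smallmatrix}\right]$. As the paragraph preceding the theorem statement explains, the $j$-th column of $X$ is precisely the vector listing the values $(b_j \cdot \Lambda)(\mathbf{x}^\alpha) = \Lambda(b_j \mathbf{x}^\alpha)$ as $\mathbf{x}^\alpha$ ranges over all monomials in $\mathrm{Mon}_{\le}(\Delta)$, in the same monomial order used for the columns of $\Mac$. This interpretation is forced by the extension lemma (Lemma \ref{Rlemma}) together with the fact that $\Mac \cdot X = 0$ means the columns of $X$, viewed as functionals on $\CC[\mathbf{x}]_\Delta$, annihilate $\I_\Delta$, and hence descend to well-defined elements of $\A^*$.

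Next I would read off what the $i$-th row of $\mathrm{Syl}_S(J)$ encodes: by Definition \ref{SylS}, it is the coefficient vector of $b_i J$ expressed in the monomial basis $\mathrm{Mon}_{\le}(\Delta)$ (the degree bound $\deg(b_i J)\le 2D+1\le \Delta$ makes this legal). Therefore the $(i,j)$ entry of the product $\mathrm{Syl}_S(J)\cdot X$ is the pairing of the coefficient vector of $b_i J$ against the column of $X$ indexed by $b_j$, which by the previous paragraph equals
\[
(b_j \cdot \Lambda)(b_i J) \;=\; \Lambda(b_i b_j J).
\]

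Finally, I would invoke the identity $\mathrm{Tr}(h) = \Lambda(hJ)$ for every $h \in \A$, which was proved in the proposition relating $M_J$ and the trace (using the generalized Jacobian identity $J = \sum_{i=1}^N b_i b_i^*$ in $\A$). Applying this to $h = b_i b_j$ gives $\Lambda(b_i b_j J) = \mathrm{Tr}(b_i b_j)$, which is exactly the $(i,j)$ entry of the right-hand side. I do not foresee a serious obstacle: the one delicate point is checking that the interpretation of columns of $X$ as values of $b_j \cdot \Lambda$ extends beyond $S$ to all of $\mathrm{Mon}_{\le}(\Delta)$ (in particular, that the extension $\mathfrak{R}_S(\mathbf{y})$ is consistent with applying $\Lambda$ to products $b_j \mathbf{x}^\alpha$ reduced modulo $\I$), but this is exactly what $\Mac \cdot X = 0$ enforces via Remark \ref{rem1} and the construction in Lemma \ref{Rlemma}.
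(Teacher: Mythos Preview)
Your proposal is correct and follows essentially the same argument as the paper's own proof: interpret the $j$-th column of the stacked matrix as the values of $b_j\cdot\Lambda$ on $\mathrm{Mon}_{\le}(\Delta)$, pair it with the coefficient vector of $b_iJ$ to get $\Lambda(b_ib_jJ)$, and then invoke the identity $\mathrm{Tr}(h)=\Lambda(hJ)$ from the earlier proposition. The only difference is that you spell out in more detail why the column interpretation is valid, whereas the paper simply asserts it by reference to the discussion after Lemma~\ref{Rlemma}.
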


\begin{proof}
Since the $j$-th column of the matrix
\begin{small}
$$
\begin{tabular}{|c|}
\hline
\\
$\mathfrak{M}_S({\bf y})$\\
\\
\hline
\\
$\mathfrak{R}_S({\bf y})$\\
\\
\hline
\end{tabular}
$$
\end{small}
represents the values of $b_j\cdot \Lambda$ on all the monomials of
degree less than or equal to $\Delta$, and the $i$-th row of ${\rm Syl}_{S}(J)$ is the coefficient matrix of
$b_i J$, we have
\begin{small}
\[
\begin{aligned}
\begin{tabular}{|ccc|}
\hline
&&\\
&${\rm Syl}_{S}(J)$&\\
&&\\
\hline
\end{tabular}\cdot
\begin{tabular}{|c|}
\hline
\\
$\mathfrak{M}_S({\bf y})$\\
\\
\hline
\\
$\mathfrak{R}_S({\bf y})$\\
\\
\hline
\end{tabular}&=\left[(b_j \cdot \Lambda)(b_i J)\right]_{i,j=1}^N\notag\\
&=\left[\Lambda(J b_i b_j)\right]_{i,j=1}^N\notag\\
&=[Tr(b_ib_j)]_{i,j=1}^N.\notag
\end{aligned}
\]
\end{small}
\end{proof}

We can now describe the algorithm to compute a set of multiplication
matrices $M_{x_i}$, $i=1,\ldots,m$ of the radical $\sqrt{\I}$ of
$\I$ with respect to a  basis of $\CC[\xb]/\sqrt{\I}$.
To prove that
the algorithm below is correct we need the following result from
\cite[Proposition 8.3]{JaRoSza07}
which is the consequence of the  fact that the kernel of the matrix of traces corresponds to the radical of  $\A$:

\begin{prop}
Let $\tilde{R}$ be a maximal non-singular submatrix of the matrix of
traces $R$. Let $r$ be the rank of $\tilde{R}$, and $T:=[b_{i_1},
\ldots, b_{i_r}]$ be the monomials corresponding to the columns of
$\tilde{R}$. Then $\,T$ is a basis of the algebra $\CC[{\bf
x}]/\sqrt{\I} $ and for each $k=1, \ldots, m$, the solution
$M_{x_k}$ of  the linear matrix equation
$$\tilde{R}M_{x_k}=\tilde{R}_{x_k}$$
is the  multiplication matrix of $x_k$  for $\sqrt{\I} $ with
respect to  $T$. Here $\tilde{R}_{x_k}$ is the $r\times r$ submatrix
of $[Tr(x_kb_ib_j)]_{i,j=1}^N$ with the same row and column indices
as in $\tilde{R}$.
\end{prop}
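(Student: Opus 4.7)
The plan is to reduce both claims to the classical fact that the radical of the trace form $B(a,b):=Tr(ab)$ on $\A$ equals the nilradical $\sqrt{\I}/\I$. One direction is immediate: if $h\in\sqrt{\I}/\I$ then $ah$ is nilpotent for every $a\in\A$, so $M_{ah}$ is nilpotent and has trace zero. The other direction follows because the induced trace form on the reduced quotient $\A/(\sqrt{\I}/\I)\cong\CC[\xb]/\sqrt{\I}$ is nondegenerate: since $\CC$ is algebraically closed and the quotient is zero-dimensional and reduced, it is isomorphic to a product of copies of $\CC$, on which the trace form is diagonal with nonzero entries. In particular, $r=\mathrm{rank}(R)=\dim_\CC\CC[\xb]/\sqrt{\I}$.

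Let $\{b_{j_1},\ldots,b_{j_r}\}$ index the rows of $\tilde{R}$ and $T=[b_{i_1},\ldots,b_{i_r}]$ index its columns. The central technical step I would isolate is the identity
\[
\ker\tau\;=\;\sqrt{\I}/\I,\qquad\text{where}\quad\tau:\A\to\CC^r,\ h\mapsto\bigl(Tr(b_{j_a}h)\bigr)_{a=1}^r.
\]
The inclusion $\sqrt{\I}/\I\subseteq\ker\tau$ is the nilpotency argument above. For the reverse, note that $\tau$ restricted to $\mathrm{span}_\CC(T)$ is represented by the matrix $\tilde{R}$, which is nonsingular; hence this restriction is already a bijection onto $\CC^r$, so $\tau$ itself is surjective. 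The dimension count $\dim\ker\tau=N-r=\dim(\sqrt{\I}/\I)$ then forces equality.

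With this in hand, the basis claim is direct: if $\sum_k c_k b_{i_k}\in\sqrt{\I}$, applying $\tau$ gives $\tilde{R}\vec c=0$, so $\vec c=0$ by nonsingularity of $\tilde{R}$; linear independence of $T$ modulo $\sqrt{\I}$, combined with $|T|=r=\dim\CC[\xb]/\sqrt{\I}$, shows $T$ is a basis of $\CC[\xb]/\sqrt{\I}$. For the multiplication matrices, let $M_{x_k}$ be the solution of $\tilde{R}M_{x_k}=\tilde{R}_{x_k}$ and set, for each column index $b\in\{1,\ldots,r\}$,
\[
v_b\;:=\;x_k b_{i_b}-\sum_{c=1}^r(M_{x_k})_{c,b}\,b_{i_c}\;\in\;\A.
\]
Expanding column $b$ of the matrix equation yields $Tr(b_{j_a}v_b)=0$ for all $a$, i.e.\ $v_b\in\ker\tau=\sqrt{\I}/\I$. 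Equivalently, $x_k b_{i_b}\equiv\sum_c(M_{x_k})_{c,b}\,b_{i_c}\pmod{\sqrt{\I}}$, which by Definition \ref{mtraces} is exactly the defining relation for $M_{x_k}$ to be the multiplication matrix of $x_k$ on $\CC[\xb]/\sqrt{\I}$ with respect to $T$.

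The main obstacle is the identity $\ker\tau=\sqrt{\I}/\I$, which packages together the nondegeneracy of the trace form on the reduced quotient and the rank condition on $\tilde{R}$; once it is established, both assertions reduce to elementary linear-algebraic bookkeeping.
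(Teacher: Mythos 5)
Your proof is correct, and it takes the route the paper itself points to: the paper does not actually prove this proposition but imports it from Janovitz-Freireich--R\'onyai--Sz\'ant\'o, remarking only that it is a consequence of the fact that the kernel of the matrix of traces corresponds to the radical of $\A$. Your identity $\ker\tau=\sqrt{\I}/\I$ --- obtained by combining the nilpotency inclusion with surjectivity of $\tau$ (forced by nonsingularity of $\tilde{R}$) and a dimension count --- is exactly that fact, correctly adapted to an arbitrary maximal nonsingular submatrix whose row and column index sets need not coincide, and the two conclusions do then follow by the bookkeeping you describe.

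Two small caveats. First, the classical fact you invoke (Dickson's lemma) needs that no local multiplicity vanish in $\CC$, e.g.\ $\mathrm{char}\,\CC=0$: the form induced on $\CC[{\bf x}]/\sqrt{\I}$ by $Tr_{\A}$ is diagonal with entries the multiplicities $d_i=\dim_\CC\A_{\zeta_i}$, not the intrinsic trace form of a product of copies of $\CC$, so ``diagonal with nonzero entries'' silently conflates the two and hides a characteristic hypothesis (the paper, which allows an arbitrary algebraically closed field, shares this imprecision). Second, your final relation $x_kb_{i_b}\equiv\sum_{c}(M_{x_k})_{c,b}\,b_{i_c}\pmod{\sqrt{\I}}$ exhibits $M_{x_k}$ as the matrix of the multiplication map (columns carrying the coordinates of images), whereas Definition \ref{mtraces} declares the multiplication matrix to be the \emph{transpose} of that matrix; this is a harmless convention mismatch --- the two matrices are similar, and for a symmetric (principal) choice of $\tilde{R}$ they solve the transposed versions of the same equation --- but as written your last sentence is off by a transpose from the paper's stated convention.
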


\noindent\begin{alg}\label{algo}

\noindent\textsc{Input}: $\f=[f_1,\ldots,f_s]\in\CC[{\bf x}]$
of degrees $d_1,\ldots,d_s$ generating an ideal $\I$ and
$\delta>0$ such
that they satisfy the conditions in Assumption  \ref{assump}. An optional input is $D\leq\delta$, which by default is set to be $\delta$.\\

\noindent\textsc{Output}: A basis $T$ for the factor algebra
$\CC[{\bf x}]/\sqrt{\I}$ and a set of multiplication matrices
$\{M_{x_i}|i=1,\ldots,m\}$ of $\sqrt{\I}$  with respect to the basis
$T$.

\begin{enumerate}

\item Compute ${\rm Mac}_{\Delta}(\f)$ for $\Delta:=\max(2D+1, \delta)$

\item Compute a basis $S$ of $\mathbb{K}[{\bf x}]_{\Delta}/\langle {\bf f}\rangle_{\Delta}$ such that the polynomials in $S$ have degrees at most $D$. Let $S=[b_1,\ldots,b_N]$.

\item Compute a random combination ${\bf y}$
of the elements of a basis of $Null(\Mac)$.

\item Compute the moment matrix $\mathfrak{M}_S({\bf y})$ defined in Definition \ref{moment} and
$\mathfrak{R}_S({\bf y})$ defined in Lemma \ref{Rlemma}.

\item Compute $\mathfrak{M}_S^{-1}({\bf y})$ and the
basis $[b_1^*,\ldots,b_N^*]$ defined in Definition \ref{dualbasis}.

\item Compute $J=\sum_{i=1}^N b_ib_i^*\;\text{ mod }  \I$ using $\Mac$.

\item Compute ${\rm Syl}_{S}(J)$ and ${\rm Syl}_{S}(x_kJ)$ for $k=1,\ldots,m$ defined in Definition \ref{SylS}.

\item Compute \\
\begin{small}$R=[Tr(b_ib_j)]_{i,j=1}^N=$ \begin{tabular}{|ccc|}
\hline
&&\\
&${\rm Syl}_{S}(J)$&\\
&&\\
\hline
\end{tabular} $\cdot$ \begin{tabular}{|c|}
\hline
\\
$\mathfrak{M}_S({\bf y})$\\
\\
\hline
\\
$\mathfrak{R}_S({\bf y})$\\
\\
\hline
\end{tabular}\\

and\\
$R_{x_k}$:=$[Tr(x_kb_ib_j)]_{i,j=1}^N=$ \begin{tabular}{|ccc|}
\hline
&&\\
&${\rm Syl}_{S}(x_kJ)$&\\
&&\\
\hline
\end{tabular} $\cdot$ \begin{tabular}{|c|}
\hline
\\
$\mathfrak{M}_S({\bf y})$\\
\\
\hline
\\
$\mathfrak{R}_S({\bf y})$\\
\\
\hline
\end{tabular}\quad for $k=1,\ldots,m$.
\end{small}

\item Compute $\tilde{R}$,  a maximal non-singular submatrix of
$R$. Let $r$ be the rank of $\tilde{R}$, and $T:=[b_{i_1}, \ldots,
b_{i_r}]$ be the monomials corresponding to the columns of
$\tilde{R}$.

\item For each $k=1, \ldots, m$ solve the linear matrix equation $\tilde{R}M_{x_k}=\tilde{R}_{x_k}$, where
$\tilde{R}_{x_k}$ is the submatrix of ${R}_{x_k}$ with the same row
and column indices as in $\tilde{R}$.

\end{enumerate}
\end{alg}

\begin{rem}  Since the bound given in Theorem \ref{deltatheorem} might be
   too high, it seems reasonable to design the algorithm in
   an iterative fashion, similarly to the algorithms in \cite{LaLaRo07,LaLaRo072,ReZh04}, in order to
   avoid nullspace computations for large matrices.  The bottleneck of our algorithm is doing computations with ${\rm Mac}_{\Delta}(\f)$, since its size exponentially increases as $\Delta$ increases. 
\end{rem}

\begin{rem}
Note that if $s=m$ then we can use the conventional Jacobian of $f_1, \ldots,
f_m$ in the place of $J$, and any $|{\rm Mon}_\leq (\Delta)|\times
|S|$ matrix $X$ such that it has full rank and $\Mac \cdot X={\bf
0}$ in the place of
\begin{small}
\[\begin{tabular}{|c|} \hline
\\
$\mathfrak{M}_S({\bf y})$\\
\\
\hline
\\
$\mathfrak{R}_S({\bf y})$\\
\\
\hline\end{tabular}\,.\]
\end{small}Even though this way we will not get
matrices of traces,  a system of multiplication matrices of the
radical $\sqrt{\I}$ can still be recovered:
 if $\tilde{Q}$ denotes a maximal non-singular submatrix of ${\rm Syl}_S(J)\cdot X$, and
$\tilde{Q}_{x_k}$ is the submatrix of ${\rm Syl}_S(x_k J)\cdot X$
with the same row and column indices as in $\tilde{Q}$, then the
solution $M_{x_k}$ of the linear matrix equation
$\tilde{Q}M_{x_k}=\tilde{Q}_{x_k}$ gives the same multiplication
matrix of $\sqrt{\I}$ w.r.t. the same basis $T$ as the above
Algorithm.
\end{rem}

\begin{rem}
As $M_{x_{k}}$ is the matrix of multiplication by $x_{k}$ modulo the
radical ideal $\sqrt{\I}$, its eigenvectors are (up to a non-zero
scalar) the interpolation polynomials  at the roots of $\I$.
Similarly the eigenvectors of the transposed matrix $M_{x_{k}}^{t}$
are  (up to a non-zero scalar) the evaluation at the roots $\zeta$
of $\I$ (see \cite{BMr98,ElMo07} for more details). The vector which
represents this evaluation at $\zeta$ in the dual space $\A^{*}$ is
the vector of values of $[b_{1}, \ldots, b_{N}]$ at $\zeta$. To
obtain these vectors, we solve the generalized eigenvalue problem
$(\tilde{R}_{x_{k}}^{t}-z \tilde{R}^{t})\, w =0$ and compute $v =
\tilde{R}^{t}\, w$. The vectors $v$ will be of the form
$[b_{1}(\zeta),\ldots, b_{N}(\zeta)]$ for $\zeta$ a root of $\I$. If
$b_{1}=1,b_{2}=x_{1}, \ldots, b_{m+1}=x_{m}$, we can read directly
the coordinates of $\zeta$ from this vector.
\end{rem}

\section{Examples}

In this section we present three examples. Each of them has three
polynomials in two variables. The first one is a system which has
roots with multiplicities, the second one is a system which has clusters
of roots, and the third one is a system obtained by perturbing the
coefficients of the first one. For each of them we compute the
Macaulay matrix $\Mac$, the vector ${\bf y}$ in its nullspace, the
moment matrix $\mathfrak{M}_S({\bf y})$, the polynomial $J$, the
matrix of traces $R$ and the (approximate) multiplication matrices
of the (approximate) radical, following Algorithm \ref{algo}.

The exact system:
\begin{scriptsize}
$$
{\bf f}=\begin{cases}
                 3x_1^2+18x_1x_2-48x_1+21x_2^2-114x_2+156\\
x_1^3-\frac{259}{4}x_1^2x_2+\frac{493}{4}x_1^2 -\frac{611}{4}x_1x_2^2+\frac{2423}{4}x_1x_2 -\frac{1175}{2}x_1\\
\quad\quad -5x_2^3+6x_2^2+x_2+5\\
x_1^3+\frac{81}{4}x_1^2x_2-\frac{163}{4}x_1^2+\frac{21}{4}x_1x_2^2+\frac{87}{4}x_1x_2-\frac{151}{2}x_1-x_2^3\\
\quad\quad +4x_2^2+2x_2+3
            \end{cases}
$$
\end{scriptsize}${\bf f}$ has common roots $(-1,3)$ of multiplicity 3 and $(2,2)$ of
multiplicity 2.

The system with clusters:
\begin{scriptsize}
$$
\bar{{\bf f}}=\begin{cases}
3x_1^2+17.4x_1x_2-46.5x_1+23.855x_2^2-127.977x_2+171.933\\
x_1^3-72.943x_1^2x_2+139.617x_1^2-8.417x_1x_2^2-124.161x_1x_2\\
\quad\quad +295.0283x_1-5x_2^3+6x_2^2+x_2+5\\
x_1^3+21.853x_1^2x_2-43.658x_1^2-27.011x_1x_2^2+185.548x_1x_2\\
\quad\quad -274.649x_1-x_2^3+4x_2^2+2x_2+3
            \end{cases}
$$
\end{scriptsize}$\bar{\bf f}$ has two clusters: $(-1,3),\, (-0.9,3),\, (-1.01,3.1) $ and $(2,2)$, $(1.9,2)$
 each of radius $10^{-1}$.

The perturbed system:
\begin{scriptsize}
$$
\hat{{\bf f}}=\begin{cases}
3x_1^2+18x_1x_2-48x_1+21.001x_2^2-113.999x_2+156.001\\
1.001x_1^3-64.751x_1^2x_2+123.250x_1^2-152.750x_1x_2^2\\
\quad \quad +605.751x_1x_2-587.500x_1-4.999x_2^3+6.0001x_2^2+x_2+5\\
x_1^3+20.249x_1^2x_2-40.750x_1^2+5.249x_1x_2^2+21.749x_1x_2 \\
\quad\quad -75.5x_1-1.001x_2^3+4x_2^2+2x_2+ 3
            \end{cases}
$$
\end{scriptsize}is obtained from ${\bf f}$ by a random perturbation of size
$10^{-3}$. This system has no common roots.

We set $\delta=6$, $D=2$ and $\Delta=6$. The Sylvester matrices in all three cases were size $28\times 28$ and
in the first two cases they had rank $23$ while in the last case it
was full rank. In the first two cases the fact that the corank is 5 indicates that there are 5 solutions, counting multiplicities. For these cases we computed a basis
$S:=[1,x_1,x_2,x_1x_2,x_1^2]$ for the factor algebra by taking
maximum rank submatrices of the Macaulay matrices. In the third
case, we simply erased the columns of the Macaulay matrix
corresponding to the monomials in $S$. From here, we chose random
elements in the nullspaces of the (cropped) Macaulay matrices to
compute the moment matrices: \begin{tiny}
$$
\left[\begin{array}{ccccc}
1& 0& 0& 0& 0\\
0& 0& 0& \frac{-6}{7}& \frac{-34}{7}\\
0& 0& \frac{-52}{7}& \frac{10}{7}& \frac{-6}{7}\\
0& \frac{-6}{7}& \frac{10}{7}& \frac{-40}{7}& \frac{-36}{7}\\
0& \frac{-34}{7}& \frac{-6}{7}& \frac{-36}{7}& \frac{-276}{7}\\
\end{array}\right],$$
$$
\left[\begin{array}{ccccc}
1& 0& 0& 0& 0\\
0& 0& 0& 1.787& -20.059\\
0& 0& -7.207& 1.219& 1.787\\
0& 1.787& 1.219& 7.702& -43.499\\
0& -20.059& 1.787& -43.499& -43.644\\
\end{array}\right]\,\text{ and}$$
$$
\left[\begin{array}{ccccc}
1& 0& 0& 0& 0\\
0& 0& 0& -0.858& -4.848\\
0& 0& -7.428& 1.428& -0.858\\
0& -0.858& 1.428& -5.719& -5.125\\
0& -4.848& -0.858& -5.125& -39.404\\
\end{array}\right].
$$
\end{tiny}The polynomials $J$, computed from the moment matrices are:
\begin{small}
$$
\begin{aligned}
J&=5-\frac{3}{10}x_1-\frac{26}{15}x_2-\frac{1}{30}x_1x_2-\frac{1}{5}x_1^2 \notag\\
\bar{J}&=5+ 0.916x_1 - 1.952x_2 - 0.636x_1 x_2 - 0.106x_1^2    \notag\\
\hat{J}&=4.999-0.306x_1-1.733x_2-0.030x_1x_2-0.200x_1^2 .\notag
\end{aligned}
$$
\end{small}

After computing the matrices ${\rm Syl}_{S}(J)$ and
$\mathfrak{R}_S({\bf y})$, we obtain the matrices of traces:

\begin{tiny}
$$
\left[\begin{array}{ccccc}
5& 1& 13& -1& 11\\
1& 11& -1& 25& 13\\
13& -1& 35& -11& 25\\
-1& 25& -11& 59& 23\\
11& 13& 25& 23& 35\\
\end{array}\right],$$
$$
\left[\begin{array}{ccccc} 4.999& 0.990& 13.100&
-1.031& 10.440\\
0.990& 10.440& -1.031&
23.812& 12.100\\
13.100& -1.031& 35.610&
-11.206& 23.812\\
-1.031& 23.812& -11.206&
56.533& 21.337\\
10.440& 12.100& 23.812& 21.337& 31.729\\
\end{array}\right]\,\text{ and}$$
$$
\left[\begin{array}{ccccc} 5& 0.995& 13.002&
-1.017& 11.003\\
0.995& 10.999& -1.015&
25.019& 12.913\\
13.002& -1.017& 35.013&
-11.064& 25.029\\
-1.017& 25.0256& -11.061&
59.129& 22.770\\
11.003& 12.870& 25.0519& 22.644& 34.968\\
\end{array}\right].$$

\end{tiny}

The first matrix $R$ has rank $2$, while $\bar{R}$ and $\hat{R}$
have rank 5. In the first case we follow steps 9 and 10 of Algorithm
\ref{algo} to obtain the multiplication matrices of the radical with respect to its basis $T=[1,\,x_1]$:
\begin{tiny}
$$
\left[\begin{array}{cc} 1&1\\
2&0
\end{array}\right]\;\text{ and }\left[\begin{array}{cc}
\frac{7}{3}&\frac{-1}{3}\\
\frac{-2}{3}&\frac{8}{3}
\end{array}\right],
$$
\end{tiny}with respective eigenvalues $[2,-1]$ and $[2,3]$.

For the second case we use the method described in
\cite{JaRoSza06,JaRoSza07} to compute the approximate multiplication
matrices of the approximate radical of the clusters. Using Gaussian
Elimination with complete pivoting, we found that the almost
vanishing pivot elements were of the order of magnitude of $10^{-1}$
which clearly indicated the numerical rank. Using the submatrices
obtained from the complete pivoting algorithm we got the following
approximate multiplication matrices of the approximate radical with respect to the basis $T=[x_1x_2,\,x_2]$:
\begin{tiny}
$$
\left[\begin{array}{cc} 0.976&1\\
1.895&-4.623\times 10^{-7}
\end{array}\right]\;\text{ and }\left[\begin{array}{cc}
2.346&-0.354\\
-0.671&2.691
\end{array}\right].
$$
\end{tiny}The norm of the commutator of these matrices is $0.002$ and their
eigenvalues are respectively $[1.949, -0.972]$ and $[2.001, 3.036]$. Note that the corresponding roots
$[1.949,2.001]$ and $[-0.972,3.036]$ are within $10^{-2}$ distance from the centers of gravity of the clusters,
as was shown in  \cite{JaRoSza06,JaRoSza07} (recall that the radius of the clusters was $10^{-1}$).

In the third case, the numerical rank was not easy to determine
using either SVD or complete pivoting. However, when we assume that the numerical
rank of $R$ is 2, and we cut the
matrix $R$ using the output of the complete pivoting algorithm,  then we obtain the multiplication matrices  with respect to the basis $T=[x_1x_2,\,x_2]$:
\begin{tiny}
$$
\left[\begin{array}{cc} 1.005&0.992\\
1.992&0.005
\end{array}\right]\;\text{ and }\left[\begin{array}{cc}
2.327&-0.330\\
-0.663&2.664
\end{array}\right].
$$
\end{tiny}The norm of the commutator of these matrices is $0.010$ and their
eigenvalues are respectively $[1.997, -0.987]$ and $[1.999, 2.993]$ (recall that the perturbation of the polynomials
was of size $10^{-3}$).

\section{Conclusion}

In this paper we gave an algorithm to compute matrices of traces and
the radical of an ideal $\I$ which has finitely many projective
common roots, none of them at infinity and its factor algebra is
Gorenstein. A follow-up paper will consider an extension of the
above algorithm which also works in the non-Gorenstein case and for
systems which have roots at infinity, as well as an alternative
method using Bezout matrices for the affine complete intersection
case to compute the radical $\sqrt{\I}$.

\bibliographystyle{abbrv}

{\small


\begin{thebibliography}{10}

\vspace{.3cm}
\bibitem{ArSo95}
I.~Armend\'{a}riz and P.~Solern\'{o}.
\newblock On the computation of the radical of polynomial complete intersection
  ideals.
\newblock In {\em AAECC-11: Proceedings of the 11th International Symposium on
  Applied Algebra, Algebraic Algorithms and Error-Correcting Codes}, pages
  106--119, 1995.

\bibitem{Be91}
E.~Becker.
\newblock Sums of squares and quadratic forms in real algebraic geometry.
\newblock In {\em De la g\'eom\'etrie alg\'ebrique r\'eelle (Paris, 1990)},
  volume~1 of {\em Cahiers S\'em. Hist. Math. S\'er. 2}, pages 41--57. 1991.

\bibitem{BeCaRoSz96}
E.~Becker, J.~P. Cardinal, M.-F. Roy, and Z.~Szafraniec.
\newblock Multivariate {B}ezoutians, {K}ronecker symbol and {E}isenbud-{L}evine
  formula.
\newblock In {\em Algorithms in algebraic geometry and applications (Santander,
  1994)}, volume 143 of {\em Progr. Math.}, pages 79--104.

\bibitem{BeWo94}
E.~Becker and T.~W{\"o}rmann.
\newblock On the trace formula for quadratic forms.
\newblock In {\em Recent advances in real algebraic geometry and quadratic
  forms}, volume 155 of {\em Contemp. Math.}, pages 271--291. 1994.

\bibitem{becwor96}
E.~Becker and T.~W\"{o}rmann.
\newblock Radical computations of zero-dimensional ideals and real root
  counting.
\newblock In {\em Selected papers presented at the international IMACS
  symposium on Symbolic computation, new trends and developments}, pages
  561--569, 1996.

\bibitem{brigonz02}
E.~Briand and L.~Gonzalez-Vega.
\newblock Multivariate {N}ewton sums: Identities and generating functions.
\newblock {\em Communications in Algebra}, 30(9):4527--4547, 2001.

\bibitem{carmou96}
J.~Cardinal and B.~Mourrain.
\newblock Algebraic approach of residues and applications.
\newblock In J.~Reneger, M.~Shub, and S.~Smale, editors, {\em Proceedings of
  AMS-Siam Summer Seminar on Math. of Numerical Analysis (Park City, Utah,
  1995)}, volume~32 of {\em Lectures in Applied Mathematics}, pages 189--219,
  1996.

\bibitem{CatDicStu96}
E.~Cattani, A.~Dickenstein, and B.~Sturmfels.
\newblock Computing multidimensional residues.
\newblock In {\em Algorithms in algebraic geometry and applications (Santander,
  1994)}, volume 143 of {\em Progr. Math.}, pages 135--164. 1996.

\bibitem{CatDicStu98}
E.~Cattani, A.~Dickenstein, and B.~Sturmfels.
\newblock Residues and resultants.
\newblock {\em J. Math. Sci. Univ. Tokyo}, 5(1):119--148, 1998.

\bibitem{Cox98}
D.~A. Cox, J.~B. Little, and D.~{O'Shea}.
\newblock {\em Using Algebraic Geometry}, volume 185 of {\em Graduate Texts in
  Mathematics}.
\newblock Springer-Verlag, NY, 1998.
\newblock 499 pages.

\bibitem{CuFi96}
R.~E. Curto and L.~A. Fialkow.
\newblock Solution of the truncated complex moment problem for flat data.
\newblock {\em Mem. Amer. Math. Soc.}, 119(568):x+52, 1996.

\bibitem{DaJe05}
C.~D'Andrea and G.~Jeronimo.
\newblock Rational formulas for traces in zero-dimensional algebras.
\newblock {\em http://arxiv.org/abs/math.AC/0503721}, 2005.

\bibitem{DiazGonz01}
G.~M. D{\'{\i}}az-Toca and L.~Gonz{\'a}lez-Vega.
\newblock An explicit description for the triangular decomposition of a
  zero-dimensional ideal through trace computations.
\newblock In {\em Symbolic computation: solving equations in algebra, geometry,
  and engineering (South Hadley, MA, 2000)}, volume 286 of {\em Contemp.
  Math.}, pages 21--35. 2001.

\bibitem{dickson}
L.~{Dickson}.
\newblock {\em Algebras and {T}heir {A}rithmetics}.
\newblock University of {C}hicago {P}ress, 1923.

\bibitem{ElMo07}
M.~{E}lkadi and B.~{M}ourrain.
\newblock {\em {I}ntroduction {\`a} la r{\'e}solution des syst{\`e}mes
  polynomiaux}, volume~59 of {\em {M}ath{\'e}matiques et {A}pplications}.
\newblock 2007.

\bibitem{GiMo89}
P.~Gianni and T.~Mora.
\newblock Algebraic solution of systems of polynomial equations using
  {G}r{\"o}bner bases.
\newblock In {\em Applied algebra, algebraic algorithms and error-correcting
  codes (Menorca, 1987)}, volume 356 of {\em Lecture Notes in Comput. Sci.},
  pages 247--257. 1989.

\bibitem{GiTrZa88}
P.~Gianni, B.~Trager, and G.~Zacharias.
\newblock Gr\"obner bases and primary decomposition of polynomial ideals.
\newblock {\em J. Symbolic Comput.}, 6(2-3):149--167, 1988.
\newblock Computational aspects of commutative algebra.

\bibitem{GonVeg}
L.~Gonz\'alez-Vega.
\newblock The computation of the radical for a zero dimensional ideal in a
  polynomial ring through the determination of the trace for its quotient
  algebra.
\newblock {\em Preprint}, 1994.

\bibitem{GoTr95}
L.~Gonz{\'a}lez-Vega and G.~Trujillo.
\newblock Using symmetric functions to describe the solution set of a
  zero-dimensional ideal.
\newblock In {\em Applied algebra, algebraic algorithms and error-correcting
  codes (Paris, 1995)}, volume 948 of {\em Lecture Notes in Comput. Sci.},
  pages 232--247.

\bibitem{GrPf02}
G.-M. Greuel and G.~Pfister.
\newblock {\em A {\bf {S}ingular} introduction to commutative algebra}.
\newblock 2002.
\newblock With contributions by Olaf Bachmann, Christoph Lossen and Hans
  Sch\"onemann, With 1 CD-ROM (Windows, Macintosh, and UNIX).

\bibitem{HeObPa06}
W.~Hei{\ss}, U.~Oberst, and F.~Pauer.
\newblock On inverse systems and squarefree decomposition of zero-dimensional
  polynomial ideals.
\newblock {\em J. Symbolic Comput.}, 41(3-4):261--284, 2006.

\bibitem{JaRoSza06}
I.~Janovitz-Freireich, L.~R\'{o}nyai, and \'{A}gnes Sz\'{a}nt\'{o}.
\newblock Approximate radical of ideals with clusters of roots.
\newblock In {\em ISSAC '06: Proceedings of the 2006 International Symposium on
  Symbolic and Algebraic Computation}, pages 146--153, 2006.

\bibitem{JaRoSza07}
I.~Janovitz-Freireich, L.~R\'{o}nyai, and \'{A}gnes Sz\'{a}nt\'{o}.
\newblock Approximate radical for clusters: a global approach using gaussian
  elimination or svd.
\newblock {\em Mathematics in Computer Science}, 1(2):393--425, 2007.

\bibitem{KoMoHo89}
H.~Kobayashi, S.~Moritsugu, and R.~W. Hogan.
\newblock On radical zero-dimensional ideals.
\newblock {\em J. Symbolic Comput.}, 8(6):545--552, 1989.

\bibitem{KrLo91}
T.~Krick and A.~Logar.
\newblock An algorithm for the computation of the radical of an ideal in the
  ring of polynomials.
\newblock In {\em Applied algebra, algebraic algorithms and error-correcting
  codes (New Orleans, LA, 1991)}, volume 539 of {\em Lecture Notes in Comput.
  Sci.}, pages 195--205.

\bibitem{KrLo91b}
T.~Krick and A.~Logar.
\newblock Membership problem, representation problem and the computation of the
  radical for one-dimensional ideals.
\newblock In {\em Effective methods in algebraic geometry (Castiglioncello,
  1990)}, volume~94 of {\em Progr. Math.}, pages 203--216. 1991.

\bibitem{Kun86}
E.~Kunz.
\newblock {\em K{\"a}hler differentials}.
\newblock Advanced lectures in {M}athematics. Friedr. Vieweg and Sohn, 1986.

\bibitem{Lak90}
Y.~N. Lakshman.
\newblock On the complexity of computing a {G}r{\"o}bner basis for the radical
  of a zero-dimensional ideal.
\newblock In {\em In Proceedings of the Twenty Second Symposium on Theory of
  Computing}, pages 555--563, 1990.

\bibitem{Lak91}
Y.~N. Lakshman.
\newblock A single exponential bound on the complexity of computing {G}r\"obner
  bases of zero-dimensional ideals.
\newblock In {\em Effective methods in algebraic geometry (Castiglioncello,
  1990)}, volume~94 of {\em Progr. Math.}, pages 227--234. 1991.

\bibitem{LakLaz91}
Y.~N. Lakshman and D.~Lazard.
\newblock On the complexity of zero-dimensional algebraic systems.
\newblock In {\em Effective methods in algebraic geometry (Castiglioncello,
  1990)}, volume~94 of {\em Progr. Math.}, pages 217--225. 1991.

\bibitem{LaLaRo07}
J.~B. Lasserre, M.~Laurent, and P.~Rostalski.
\newblock {A unified approach to computing real and complex zeros of
  zero-dimensional ideals}.
\newblock {\em preprint}, 2007.

\bibitem{LaLaRo072}
J.~B. Lasserre, M.~Laurent, and P.~Rostalski.
\newblock {Semidefinite characterization and computation of zero-dimensional
  real radical ideals}.
\newblock {\em to appear in Foundations of Computational Mathematics}, 2007.

\bibitem{La81}
D.~Lazard.
\newblock R\'esolution des syst\`emes d'\'equations alg\'ebriques.
\newblock {\em Theoret. Comput. Sci.}, 15(1):77--110, 1981.

\bibitem{BMr98}
B.~Mourrain.
\newblock Computing isolated polynomial roots by matrix methods.
\newblock {\em J. of Symbolic Computation, Special Issue on Symbolic-Numeric
  Algebra for Polynomials}, 26(6):715--738, Dec. 1998.

\bibitem{PeRoSz93}
P.~Pedersen, M.-F. Roy, and A.~Szpirglas.
\newblock Counting real zeros in the multivariate case.
\newblock In {\em Computational algebraic geometry (Nice, 1992)}, volume 109 of
  {\em Progr. Math.}, pages 203--224. Boston, MA, 1993.

\bibitem{Rouiller99}
F.~Rouiller.
\newblock Solving zero-dimensional systems through the rational univariate
  representation.
\newblock In {\em AAECC: Applicable Algebra in Engineering, Communication and
  Computing}, volume~9, pages 433--461, 1999.

\bibitem{ScSt75}
G.~Scheja and U.~Storch.
\newblock {\"U}ber {S}purfunktionen bei vollst{\"a}ndigen {D}urschnitten.
\newblock {\em J. Reine Angew Mathematik}, 278:174--190, 1975.

\bibitem{Stan96}
R.~P. Stanley.
\newblock {\em Combinatorics and commutative algebra}, volume~41 of {\em
  Progress in Mathematics}.
\newblock Birkh\"auser, 1996.

\bibitem{YoNoTa92}
K.~Yokoyama, M.~Noro, and T.~Takeshima.
\newblock Solutions of systems of algebraic equations and linear maps on
  residue class rings.
\newblock {\em J. Symbolic Comput.}, 14(4):399--417, 1992.

\bibitem{ReZh04}
L.~Zhi and G.~Reid.
\newblock Solving nonlinear polynomial systems via symbolic-numeric elimination
  method.
\newblock In {\em In Proceedings of the International Conference on Polynomial
  System Solving}, pages 50--53, 2004.

\end{thebibliography}
}

\end{document}